\PassOptionsToPackage{prologue,usenames,dvipsnames,table}{xcolor}
\PassOptionsToPackage{bookmarks,unicode,colorlinks=true}{hyperref}
\newif\ifanonymous

\anonymousfalse

\documentclass[runningheads]{llncs}

\usepackage{amssymb,amsmath,mathtools}
\usepackage{xspace}
\usepackage{thm-restate}
\usepackage[hyperfootnotes=true,colorlinks,linkcolor=RedViolet,anchorcolor=RedViolet,citecolor=blue,urlcolor=black]{hyperref}
\usepackage[capitalize,nameinlink]{cleveref}
\usepackage{minted, tabularray}
\usepackage{float}  
\usepackage[inline]{enumitem}
\usepackage{bm}
\usepackage{nicefrac}
\usepackage{adjustbox}
\usepackage{wrapfig}
\usepackage{orcidlink}
\usepackage[utf8]{inputenc}
\usepackage{caption,subcaption}      
\usepackage{tcolorbox}               
\tcbuselibrary{listings,skins}       
\usepackage{listings}                
\usepackage{xcolor}
\usepackage{fontawesome5}
\usepackage{booktabs,tabularx}
\usepackage{threeparttable}
\usepackage{pifont}
\usepackage{siunitx}
\usepackage{listings} 
\definecolor{codegreen}{rgb}{0,0.6,0}
\definecolor{codegray}{rgb}{0.5,0.5,0.5}
\definecolor{codepurple}{rgb}{0.58,0,0.82}
\definecolor{backcolour}{rgb}{0.9647, 0.9725, 0.9804}
\definecolor{codered}{rgb}{0.698,0.133,0.133}
\definecolor{codeblue}{rgb}{0.169, 0.431, 0.639}
\lstdefinestyle{mystyle}{
language=Python,
backgroundcolor=\color{backcolour}, 
commentstyle=\color{codegreen},
numberstyle=\tiny\color{codegray},
stringstyle=\color{codepurple},
basicstyle=\ttfamily\scriptsize,
breakatwhitespace=false, 
breaklines=true, 
captionpos=b, 
numbers=left, 
numbersep=5pt, 
showspaces=false, 
showstringspaces=false,
showtabs=false, 
tabsize=1,
xleftmargin=0.3cm,
mathescape,
keywordstyle=\color{codegray}\bfseries,
morekeywords={from,import}
}

\crefname{figure}{Fig.}{Fig.}
\crefname{lemma}{Lem.}{Lem.}
\crefname{example}{Exmp.}{Exmp.}
\crefname{section}{Sect.}{Sect.}
\crefname{problem}{Prob.}{Prob.}
\crefname{appendix}{Appx.}{Appx.}
\crefname{definition}{Def.}{Def.}
\crefname{theorem}{Thm.}{Thm.}
\crefname{corollary}{Cor.}{Cor.}
\crefname{algorithm}{Alg.}{Alg.}
\crefname{natation}{Not.}{Not.}
\input{macros.tex}

\newcommand{\rev}[1]{\textcolor{black}{#1}}

\makeatletter
\def\thanks#1{\protected@xdef\@thanks{\@thanks
        \protect\footnotetext{#1}}}
\makeatother

\begin{document}
\title{
STLts-Div: Diversified Trace Synthesis from STL Specifications Using MILP\\
(Extended Version)
\thanks{The authors are supported by JST ASPIRE Grant No.\ JPMJAP2301. \rev{Jie An is supported by the CAS Project for Young Scientists in Basic Research Grant No.\ YSBR-123. Zhenya Zhang is supported by JST BOOST Grant No. JPMJBY24D7 and JSPS Grant No. JP25K21179.} The work was done during M.J.'s internship at National Inst.\ of Inform., Japan.}
}

\titlerunning{STLts-Div: Diversified Trace Synthesis from STL Using MILP}
%

\ifanonymous
    \author{}
    \institute{}
\else
    \author{Martin Jouve-Genty\inst{1,2}\orcidlink{0009-0008-9912-3225}\textsuperscript{$\dagger$}
    \and
    Han Su\inst{2}\orcidlink{0000-0003-4260-8340}\textsuperscript{$\dagger$}
    \and
    Sota Sato\inst{2,5}\orcidlink{0000-0001-7147-3989}
    \and
    Jie An\inst{3}\orcidlink{0000-0001-9260-9697}
    \and 
    Zhenya Zhang\inst{4,2}\orcidlink{0000-0002-3854-9846}
    \and
    Ichiro Hasuo\inst{2,5,6}\orcidlink{0000-0002-8300-4650}
    }
    \authorrunning{M.~Jouve-Genty et~al.}

    \institute{Département Informatique, ENS de Lyon, Lyon, France \\
    \email{martin.jouve-genty@ens-lyon.fr}
    \and 
    National Institute of Informatics, Tokyo, Japan \\
    \email{\{suhan,sotasato,hasuo\}@nii.ac.jp}
    \and
    Institute of Software, Chinese Academy of Sciences, Beijing, China \\
    \email{anjie@iscas.ac.cn}\and
    Kyushu University, Fukuoka, Japan \\
    \email{zhang@ait.kyushu-u.ac.jp}\and
    SOKENDAI (The Graduate University for Advanced Studies), Tokyo, Japan\and
    Imiron Co., Ltd., Tokyo, Japan
    }
\fi
\maketitle              
\def\thefootnote{$\dagger$}
\footnotetext{Equal contribution.}
\renewcommand{\thefootnote}{\arabic{footnote}}

\begin{abstract}
    Modern cyber–physical systems are complex, and requirements are often written in Signal Temporal Logic (STL). Writing the right STL is difficult in practice; engineers benefit from concrete executions that illustrate what a specification actually admits. Trace synthesis addresses this need, but a single witness rarely suffices to understand intent or explore edge cases---diverse satisfying behaviors are far more informative. We introduce diversified trace synthesis: the automatic generation of sets of behaviorally diverse traces that satisfy a given STL formula. Building on a MILP encoding of STL and system model, we formalize three complementary diversification objectives---Boolean distance, random Boolean distance, and value distance---all captured by an objective function and solved iteratively. We implement these ideas in STLts-Div, a lightweight Python tool that integrates with Gurobi.
\end{abstract}
\keywords{Cyber-Physical Systems  \and Signal Temporal Logic \and Trace Synthesis. }

\section{Introduction}

    \emph{Cyber–physical systems} (CPS) underpin safety-critical applications such as aircraft collision-avoidance protocols \cite{tomlin2000game}, aerospace \cite{atkins2013aerospace}, and pacemakers \cite{ye2008modelling}. Ensuring their reliability is essential, yet formal verification is challenging due to hybrid/nonlinear dynamics, continuous state spaces, and environmental uncertainty. As a result, full verification is often infeasible in practice.

    \emph{Trace synthesis} offers a pragmatic alternative: instead of proving properties for all behaviors, it constructs concrete executions that satisfy (or violate) a given temporal specification, providing actionable artifacts for testing and analysis. As a dual to model checking \cite{Sato2024}, trace synthesis has been realized via SMT encodings \cite{bae2019bounded}, optimization-based falsification \cite{ernst2021arch}, and—--more recently--—mixed-integer linear programming (MILP) encodings that capture both \emph{Signal Temporal Logic (STL)} constraints and plant dynamics \cite{Sato2024}. These approaches generate informative scenarios and counterexamples while interfacing naturally with simulators and models.

    However, a single satisfying trace is often insufficient—for example, when validating whether an STL requirement truly captures the intent of the engineer. A \emph{set} of behaviorally diverse traces that illuminate different ways the specification can be met is required. In this work we study \emph{diversified trace synthesis} from STL, aiming to generate traces that satisfy the specification while exhibiting sufficient diversity.
    

    We address this problem with STLts-Div, a tool that synthesizes sets of diverse traces for a given system model \(\model\) and STL specification \(\phi\). STLts-Div builds on the MILP-based synthesis framework of \cite{Sato2024}. We introduce three complementary diversity objectives---Boolean distance (BD), randomized Boolean distance (RBD), and value distance (VD)---and encode them as MILP objectives. We adopt the MILP setting because its numeric decision variables allow these objectives to be integrated directly, whereas propositional SMT encodings do not readily support summing/counting of truth assignments (see Remark~\ref{rmk:MILP-good}).

    \myparagraph{Contributions and Organization}
    We summarize our contributions as below.
    \begin{itemize}
        \item We formulate the problem of diversified trace synthesis for STL. 
        \item We propose three diversity objectives (BD, RBD, VD) and provide MILP objective function encodings compatible with \cite{Sato2024}.
        \item We implement these ideas in STLts-Div, a Python package, and evaluate them on benchmark suites, demonstrating substantially higher diversity than a solution-pool baseline.
    \end{itemize}

    \Cref{sec:pre} reviews STL preliminaries and states the problem. \Cref{sec:milp} recaps the MILP encoding of \cite{Sato2024}, which underpins our approach. \Cref{sec:diver} formalizes three diversity objectives and presents our MILP-based synthesis procedures. \Cref{sec:demo} demonstrates the STLts-Div tool. \Cref{sec:exp} reports experiments evaluating effectiveness and efficiency. We conclude in \Cref{sec:conclusion}.

    \myparagraph{Related Work}
    Signal Temporal Logic is widely used for specifying CPS behaviors. Prior work spans controller synthesis \cite{donze2015blustl,raman2015reactive,raman2014model,su2024switching}, runtime/online monitoring \cite{su2025runtime,zhang2023online}, and trace synthesis/falsification \cite{Sato2024,bae2019bounded}, among others. The choice of technique depends on the task: for continuous-time control, Lyapunov/barrier–based methods synthesize controllers that enforce STL constraints \cite{lindemann2018control}; for bounded model checking, SMT encodings search for a time partition witnessing satisfaction \cite{bae2019bounded}; and recent MILP encodings support trace synthesis by jointly searching over signal values and a \(\delta\)-stable partition (see \cref{sec:milp}) \cite{Sato2024}. Existing approaches, however, typically return one satisfying execution. In contrast, we target the multi-trace setting and synthesize sets of behaviorally diverse traces for a fixed model and STL formula.

\section{Preliminaries}\label{sec:pre}

    Let $\Nats$, $\Reals$, and $\NonNegReals$ denote the sets of natural numbers, real numbers, and non-negative real numbers, respectively. Let $\Bool = \{\top, \bot\}$ denote the Boolean domain. For a set \(X\), we write \(\pow{X}\) for its power set and \(\card{X}\) for its cardinality. An \emph{interval} is any subset of \(\NonNegReals\) of the form \((a,b), (a,b], [a,b),\) or \([a,c]\), where \(a < b\) and \(a \le c\).

    \myparagraph{Signal}
    Let \(T>0\) and \(V\) be a finite set of variables. A \emph{signal} over \(V\) with time horizon \(T\) is a function \(\sigma \colon [0,T] \to \Reals^{\card{V}}\). We let \(\signal_V^T\) denote the set of all such signals; when \(T\) is clear from the context we write \(\signal_V\). For \(v\in V\) we write  $\sigma(t)(v)$ for the value of the variable $v$ in $\sigma$ at time $t$. We extend any \(\sigma\in\signal_V^T\) to \(t > T\) by setting \(\sigma(t)\ddef\sigma(T)\) for all \(t > T\).

    For a signal \(\sigma\) and \(t\ge 0\), the \(\tpost\) is the signal \(\sigma^t\) defined by \(\sigma^t(t^\prime) \ddef\sigma(t+t^\prime) \). The \(\tpost\) is the basis for defining the STL semantics (see~\cref{def:stl-sem}). 

    \myparagraph{System model}
    Let \(V^\prime\) and \(V\) be the input and output variable sets. A \emph{system model} \(\model\) from \(V^\prime\) to \(V\) with a time horizon \(T\) is a (possibly nondeterministic) map \(\model \colon \signal_{V^\prime}^T \to \pow{\signal_{V}^T}\).
    The trace set of \(\model\) is \(\mathcal{L}(\model)\ddef\bigcup_{\tau\in\signal_{V^\prime}^T}\model(\tau)\), \ie the set of all output signals under any input $\tau$.

    
    \begin{example}[Single-lane car]
        Consider a car moving in a single lane whose dynamics are given by the ordinary differential equation (ODE)
        \begin{align}\label{eq:ode}
            \dot{x}=v,\qquad \dot{v}=a,
        \end{align}
        where \(x\), \(v\) and \(a\) denote position, velocity and acceleration, respectively.

        We formalize this system as \(\model_{\mathrm{car}}\) with input variables \(V'=\{a, v^{\mathrm{init}}, x^{\mathrm{init}}\}\) and output variables \(V=\{a, v, x\}\). Here, the acceleration $a$ changes over time; the initial velocity and position $v^{\mathrm{init}}, x^{\mathrm{init}}$ change as well but only their initial values $\tau(0)(v^{\mathrm{init}}), \tau(0)(x^{\mathrm{init}})$ matter. Given an input signal \(\tau\), the model yields \(\model_{\mathrm{car}}(\tau)=\{\sigma\}\), where \(\sigma\) is determined by \cref{eq:ode}: \(\sigma(t)(a) = \tau(t)(a)\), \(\sigma(t)(v) = \tau(0)(v^{\mathrm{init}}) + \int_0^t \tau(\theta)(a)\,\dif\theta\), \(\sigma(t)(x) = \tau(0)(x^{\mathrm{init}}) + \int_0^t \sigma(\theta)(v)\,\dif\theta\). 
    \end{example}

    \subsection{Signal Temporal Logic (STL)}\label{subsec:stl}

        We use the standard syntax and semantics of STL~\cite{Maler2004,Sato2024}. We present some basic definitions for the record.

        In STL, an \emph{atomic proposition} over a variable set \(V\) is represented as \( p\lddef (\pi_p(\vec{w}) \ge 0) \), where \(\pi_p:\Reals^{\lvert V\rvert} \to \Reals\) maps a valuation \(\vec{w}\in\Reals^{\card{V}}\) of variables in \(V\) to a real value. In this work, we restrict attention to \emph{linear} (affine) atomic propositions, \ie \(\pi_p(\vec{w})=a^\top\vec{w}+c\) with \(a\in\Reals^{\card{V}}\) and \(c\in\Reals\). This syntactic restriction enables an MILP encoding of STL formulas.

        Any STL formula has an equivalent \emph{negation normal form} (NNF)~\cite{Fainekos2009}, in which negations appear only adjacent to atomic propositions. We assume all STL formulas are in NNF in this work. The syntax of STL is as follows:
        \begin{align*}
            \phi \lddef \top \mid p \mid \neg p \mid \phi_1\wedge\phi_2 \mid \phi_1\vee\phi_2 \mid \phi_1\unt_I\phi_2 \mid \phi_1 \rels_I \phi_2,
        \end{align*}
        where $I$ is a nonsingular closed time interval, and \(\unt_I\) and \(\rels_I\) are temporal operators \emph{until} and \emph{release}, respectively. The \emph{eventually} and \emph{always} operators are syntactic sugars defined as \(\ev_I\phi=\top\unt_I\phi\), \(\alw_I\phi=\bot\rels_I\phi\). The set of all subformulas of \(\phi\) is denoted by \(\sub\).

        The Boolean semantics \(\sigma\vDash\phi\) and robust semantics \(\sem{\sigma}{\phi}\in\Reals\cup\{+\infty,-\infty\}\) of STL are standard. See \cref{app:stl}.

    \subsection{Finite Variability}

       Finite variability is a standard assumption used to make STL model checking and synthesis tractable in practice (see, \eg~\cite{lee2021efficient,prabhakar2018automatic}). In particular, without such a restriction the full (unbounded) STL model-checking problem is EXPSPACE-complete \cite{alur1996benefits}; adopting finite variability permits efficient bounded analyses, which we also assume in this work.

        \begin{definition}[finite variability \cite{rabinovich1998decidability}]
            A Boolean signal \(q:\NonNegReals\to \Bool\) has \emph{\(N\)-bounded variability} if there exists a collection \(\mathcal{P} = \{J_i\}_{i=1}^N\) of nonempty, pairwise disjoint intervals with \(\bigcup_{i=1}^N J_i=\NonNegReals\) such that \(q\) is constant on every \(J_i\in\mathcal{P}\).
        \end{definition}

        Intuitively, \(N\)-bounded variability means that the Boolean signal can change its value at most \(N-1\) times over \(\NonNegReals\). We say that a signal \(\sigma\) has $N$-bounded variability \emph{with respect to} an STL formula \(\phi\) if the Boolean signal \(t\mapsto (\sigma^t\vDash\phi)\) has the \(N\)-bounded variability. Moreover, \(\sigma\) has \emph{hereditary N-bounded variability} (abbreviated \(N\)-HBV) with respect to \(\phi\) if \(\sigma\) has \(N\)-bounded variability with respect to every subformula \(\psi\in\sub\).

        We recall the bounded STL trace synthesis problem from~\cite{Sato2024}:
        \begin{problem}\label{prob:trace-syn}
            Given an STL formula \(\phi\) (over \(V\)), a system model \(\model\) (from \(V^\prime\) to \(V\)) with time horizon \(T\), and a variability bound \(N\in\Nats\), find a trace \(\sigma\in\mathcal{L}(\model)\) such that 
            \begin{enumerate*}[label=\roman*)]
                \item \(\sigma\) has the \(N\)-HBV with respect to \(\phi\) and
                \item \(\sigma\vDash\phi\) holds,
            \end{enumerate*}
            or prove such \(\sigma\) does not exist.
        \end{problem}

        In this paper we study a \emph{diversity-aware} variant of the above problem: instead of returning an arbitrary satisfying trace, we generate a set of traces that are mutually diverse, \ie sufficiently distinct under a suitable notion of difference.
        


        Formalizing diversity is nontrivial: it can be quantified in multiple, fundamentally different ways. In \cref{sec:diver} we discuss this issue and present three distinct methods for measuring and enforcing diversity in MILP-based synthesis.

\section{MILP-Based Trace Synthesis}\label{sec:milp}

    We recall the MILP-based trace synthesis workflow in~\cite{Sato2024}. The key idea, originally from~\cite{bae2019bounded},  is to \emph{partition}  the time horizon into time intervals, with respect to the truth values of relevant STL formulas. The partition should be such that
    \begin{itemize}
        \item it ensures \emph{interval-wise constancy}: the truth value of each subformula \(\psi\in\sub\) remains constant on each interval; \label{item:constancy}
        \item it is \emph{MILP-encodable}: it avoids strict inequalities (\(<\)), since MILP solvers cannot distinguish \(<\) from \(\le\) and only support non-strict ones (\(\le\))~\cite{gurobi2019gurobi}. \label{item:encodable}
    \end{itemize}
    The original SMT-based approach~\cite{bae2019bounded} partitions the horizon into alternating singleton and open intervals, \ie \(\dotsc,(\gamma_{i-1},\gamma_{i}),\{\gamma_{i}\},(\gamma_{i},\gamma_{i+1}),\dotsc\).
    However, this scheme requires \(<\) to distinguish open and closed intervals and is therefore not MILP-encodable. To address this, we use closed intervals \(\dotsc,[\gamma_{i-1},\gamma_i],[\gamma_i,\gamma_{i+1}],\dotsc\) to partition the horizon.

    Closed intervals can be expressed only using \(\le\); but they have overlaps (at \(\gamma_i\)). This overlap creates ambiguity for the previous notion of interval-wise constancy, which requires that a subformula is either ``constantly true'' or ``never true'' on an interval. The technique of \emph{\(\delta\)-stable partition}~\cite{Sato2024} avoids this ambiguity via introducing the \emph{\(\delta\)-tightening} of a STL formula with some \(\delta>0\) (\cref{def:delta-stren}). 

    \begin{definition}[time sequence, time state sequence~\cite{Sato2024}]\label{def:timeseq}
        A \emph{time sequence} over \([0, T ]\) is a sequence \(\Gamma = (0 = \gamma_0 < \dotsb < \gamma_N = T )\). The same construct is often denoted by \(\Gamma = ([\gamma_{i-1}, \gamma_i])_{i=1}^N\), using the corresponding partition of \([0,T]\) into closed intervals with singular overlaps.  We write \(\Gamma_i\) for \([\gamma_{i-1},\gamma_i]\).

        Given a time sequence, a \emph{time state sequence} over \(V\) is a sequence \(\varsigma=\big((x_0,\gamma_0),\dotsc,(x_N,\gamma_N)\big)\), where \(x_0,\dotsc,x_N\) in \(\Reals^{\card{V}}\).
    \end{definition}
    
    \begin{definition}[piecewise-linear signal~\cite{Sato2024}]\label{def:pwlsig}
        Given a timed state sequence \(\varsigma = \big((x_0, \gamma_0), \dotsc ,(x_N , \gamma_N )\big)\), the piecewise-linear signal \(\pwlsig \colon [0, \gamma_N ] \to \Reals^{\card{V}}\) is defined by the following linear interpolation: \(\pwlsig \ddef (1 - \lambda) x_{i-1} + \lambda x_i\) if \(\gamma_{i - 1} \le t \le \gamma_i ~\big( \text{where } \lambda =  \frac{1}{\gamma_i-\gamma_{i-1}}(t - \gamma_{i - 1})\big)\).
    \end{definition}

    \begin{definition}[\(\delta\)-tightening of STL formulas in NNF~\cite{Sato2024}]\label{def:delta-stren}
        Let \(\phi\) be an STL formula in NNF whose atomic propositions are linear predicates of the form \((c^\top x + b \ge 0)\). Given \(\delta>0\), the \emph{\(\delta\)-tightening} \(\varphi^\delta\) is obtained from \(\varphi\) by 
        \begin{enumerate*}[label=\roman*)]
            \item replacing every occurrence of \((c^\top x + b \ge 0)\) with \((c^\top x + b \ge \delta)\), and
            \item replacing every occurrence of \(\neg(c^\top x + b \ge 0)\) with \((-c^\top x - b \ge \delta)\).
        \end{enumerate*}
        All Boolean connectives and temporal connectives (with the time intervals therein) are left unchanged.
    \end{definition}

    Note that \(\phi^\delta\) is stronger than \(\phi\), \ie~\(\big(\sigma \vDash \phi^\delta\big) \Rightarrow \big(\sigma \vDash \phi\big)\) for all signals \(\sigma\). Using this notion, a \(\delta\)-stable partition defined below classifies each subformula on each closed interval as either  
    \begin{enumerate*}[label=\roman*)]
        \item always true, or
        \item never \emph{robustly} true.
    \end{enumerate*}


    \begin{definition}[\(\delta\)-stability \cite{Sato2024}]
        Let \(\phi\) be an STL formula over \(V\) in NNF, \(\sigma\in\signal_V^T\) be a signal, and \(\Gamma=(\gamma_0,\dotsc,\gamma_N)\) be a time sequence (\cref{def:timeseq}) with \(\gamma_N=T\). We say \(\Gamma\) is \emph{\(\delta\)-stable} for \(\sigma\) and \(\phi\) if, for each \(i=1,\dotsc,N\) and each subformula \(\psi\in\sub\), either of the following holds: 
        \begin{enumerate*}[label=\roman*)] 
            \item \(\sigma^t \vDash \psi\) for each \(t\in \Gamma_i\), or 
            \item \(\sigma^t \nvDash \psi^\delta\) for each \(t\in \Gamma_i\).
        \end{enumerate*}
    \end{definition}    

    Finding a \(\delta\)-stable partition \(\Gamma\) for a signal \(\sigma\) and formula \(\phi\), so that \(\sigma^t\vDash\phi\) holds for \(t\in\Gamma_1\), is a sufficient method to prove \(\sigma\vDash\phi\). \rev{Recall that $\Gamma_1=[\gamma_0,\gamma_1]$ with $\gamma_0=0$ (see \cref{def:timeseq}).}
    As shown in~\cite{Sato2024}, this search can be encoded as an MILP when \(\sigma\) is piecewise-linear (\cref{def:pwlsig}). 
    

    \begin{definition}[conservative valuation~\cite{Sato2024}]\label{def:valuation}
        Let \(\phi\) be an STL formula in NNF, and \(\Gamma=(\gamma_0,\dotsc,\gamma_N)\) be a time sequence. A \emph{valuation} of \(\phi\) in \(\Gamma\) is a function \(\Theta\colon\sub\times\{1,\dotsc,N\}\to\Bool\) that assigns, to each subformula and index of the intervals of \(\Gamma\), a Boolean truth value. Let \(\sigma\) be a signal with a time horizon \(T=\gamma_N\). We say that \(\Theta\) is a \emph{conservative valuation} of \(\phi\) in \(\Gamma\) on \(\sigma\) (up to \(\delta\)) if 
        \begin{enumerate*}[label=\roman*)]
            \item \(\Theta(\psi,i)=\top\) implies that, for each \(t\in\Gamma_i\), \(\sigma^t\vDash \psi\) holds; and
            \item \(\Theta(\psi,i)=\bot\) implies, for each \(t\in\Gamma_i\), \(\sigma^t\nvDash \psi^\delta\).
        \end{enumerate*}
    \end{definition}
    
    \begin{lemma}[\cite{Sato2024}]
        Suppose there is a conservative valuation \(\Theta\) of an STL formula \(\phi\) in a time sequence \(\Gamma\) on  \(\sigma\) up to \(\delta\). Then \(\Gamma\) is \(\delta\)-stable for \(\sigma\) and \(\phi\).
    \end{lemma}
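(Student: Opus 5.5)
The argument is a direct unfolding of the two definitions, with no induction on formulas and no appeal to the semantics beyond what \cref{def:valuation} already provides. We must show that for each interval index $i\in\{1,\dotsc,N\}$ and each subformula $\psi\in\sub$, one of the two alternatives in the definition of $\delta$-stability holds. We fix such a pair $(\psi,i)$ and look at the Boolean value $\Theta(\psi,i)$. Since $\Theta$ is a total function $\sub\times\{1,\dotsc,N\}\to\Bool$ and $\Bool=\{\top,\bot\}$, exactly one of $\Theta(\psi,i)=\top$ or $\Theta(\psi,i)=\bot$ holds. This gives a two-case split.

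In the case $\Theta(\psi,i)=\top$, clause (i) of \cref{def:valuation} states that $\sigma^t\vDash\psi$ for every $t\in\Gamma_i$. This is precisely alternative (i) of $\delta$-stability for $(\psi,i)$.

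In the case $\Theta(\psi,i)=\bot$, clause (ii) of \cref{def:valuation} states that $\sigma^t\nvDash\psi^\delta$ for every $t\in\Gamma_i$. This is precisely alternative (ii) of $\delta$-stability.

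Since $(\psi,i)$ was arbitrary, $\Gamma$ is $\delta$-stable for $\sigma$ and $\phi$.

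The only point to check is bookkeeping rather than a genuine obstacle. The signal $\sigma$ in the valuation has horizon $T=\gamma_N$, matching the requirement in the stability definition, and both definitions use the same intervals $\Gamma_i=[\gamma_{i-1},\gamma_i]$ and the same $\delta$-tightening from \cref{def:delta-stren}. Once this is confirmed, the lemma amounts to the observation that a conservative valuation is a witness choosing, for each $(\psi,i)$, which disjunct of the stability condition holds.
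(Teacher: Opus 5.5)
Your proof is correct: under the definitions as stated in this paper, $\Theta$ assigns to every pair $(\psi,i)$ a value in $\{\top,\bot\}$, and clauses (i) and (ii) of a conservative valuation are exactly the two disjuncts of $\delta$-stability, so the case split on $\Theta(\psi,i)$ is all that is needed. The paper gives no proof of its own and simply cites \cite{Sato2024}, and your direct unfolding of the definitions is the natural argument.
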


    We simply write \(\valu{\psi}_i\) for \(\Theta(\psi,i)\) when \(\Theta\) is clear from context. The MILP problem therefore searches for
    \begin{enumerate*}[label=\roman*)]
        \item a partition \(\Gamma = ([\gamma_{i-1},\gamma_i])_{i=1}^N\) (see \cref{def:timeseq}), and
        \item a conservative valuation \(\Theta\) (see \cref{def:valuation}),
    \end{enumerate*}
    such that:
    \begin{itemize}
        \item the truth values assigned to subformulas comply with the STL semantics (\cref{subsec:stl}), \eg for a subformula \((\psi_1\wedge\psi_2)\in\sub\) and \(i\in\{1,\dotsc,N\}\), \(\valu{\psi_1\wedge\psi_2}_i = \valu{\psi_1}_i \wedge \valu{\psi_2}_i\);
        \item \(\Theta\) assigns \(\top\) to the top-level formula \(\phi\) in \(\Gamma_1\); and
        \item there is a piecewise-linear trace \(\sigma\in\mathcal{L}(\model)\) of \(\model\) that yields \(\Theta\).
    \end{itemize}
    The entities \(\Gamma, \Theta\) we search for are expressed as MILP variables, and the above three conditions are expressed as MILP constraints. The variables can be classified into four types, where \(N\in\Nats\) is a constant for variability bound (\cref{prob:trace-syn}). 
    \begin{itemize}
        \item Real-valued variables \(\gamma_0,\dotsc,\gamma_N\) specifying the time sequence \(\Gamma\).
        \item Boolean variables \(\{\asg[\psi]_i \mid 1\le i \le N, \psi\in\sub\}\) for the value \(\Theta(\psi,i)\) of a valuation \(\Theta\) that we search for.
        \item Real-valued variables \(\{x_{i,v}\mid 0\le i \le N, v\in V\}\) for the values of a piecewise-linear trace \(\sigma\in\mathcal{L}(\model)\).
        \item Auxiliary variables (real or binary) used to encode big-\(M\) constraints---a common technique in LP---and the temporal/satisfaction encodings.
    \end{itemize}

    We omit the MILP details here; see \cite{Sato2024} for the full encoding.

\section{Diversity-Aware Synthesis}\label{sec:diver}

    This section presents methods to synthesize mutually diverse traces. Although modern MILP solvers such as Gurobi~\cite{gurobi} can return multiple solutions from the solution pool of a single instance, those solutions are typically very similar (see \cref{sec:exp} for an empirical demonstration).
        
    To overcome this limitation, we propose three diversity measures that quantify how ``different'' one trace is from another. Each measure is encoded as an MILP objective to promote diverse solutions. \rev{We do not claim that the three methods presented here are the only possible approaches. However, given the practical need to generate diverse example traces, it is important to provide methods that work.}
        
    \subsection{Boolean Distance Method (BD)}\label{subsec:bd}
        \rev{As flipping the truth value of a subformula may induce substantial changes in the underlying signal variables, an initial attempt to diversify traces is to bias the truth valuations of subformulas in each trace. We quantify the resulting diversity using a Boolean distance between traces.}



        \begin{definition}[Boolean distance]
            Let \(\sigma_1, \sigma_2\) be two traces found by an MILP solver that 
            \begin{enumerate*}[label=\roman*)]
                \item have \(N\)-HBV (see \cref{prob:trace-syn}) and 
                \item satisfy the same STL formula \(\phi\).
            \end{enumerate*} 
            Let \(\Gamma^{\sigma_1}\) (\resp~\(\Gamma^{\sigma_2}\)) be the \(\delta\)-stable partition for \(\sigma_1\) (\resp~\(\sigma_2\)) and \(\phi\). Let \(\valu{\psi}_i^{\sigma_j}\) denote the Boolean valuation of \(\psi\in\sub\) over the interval \(\Gamma_i^{\sigma_j}\) of the partition \(\Gamma^{\sigma_j}\), where \(i\in\{1,\dotsc,N\}\) and \(j\in\{1,2\}\). The \emph{Boolean distance} between \(\sigma_1\) and \(\sigma_2\) is
            \begin{align}\label{eq:boolean-dist}
                \bdist(\sigma_1,\sigma_2) \ddef \sum_{i=1}^{N}\sum_{\psi\in\sub} \abs{\valu{\psi}_i^{\sigma_1}-\valu{\psi}_i^{\sigma_2}}.
            \end{align}
        \end{definition}
        By construction, \(\bdist(\sigma_1,\sigma_2)\) is equal to the number of pairs \((i,\psi)\) on which the two traces assign different truth values. Hence, a larger Boolean distance indicates greater diversity, in the sense of satisfaction of subformulas.

        
        \begin{remark}\label{rmk:affine}
            In \cref{eq:boolean-dist}, if the Boolean value of \(\valu{\psi}_i^{\sigma_1}\) is fixed for \(i\in \{1,\dotsc,N\}\) (which is the case in our usage), the absolute value  can be rewritten as 
            \begin{align*}
              &  \abs{\valu{\psi}_i^{\sigma_1}-\valu{\psi}_i^{\sigma_2}} =
                \begin{cases}
                    \valu{\psi}_i^{\sigma_2} & \tif \valu{\psi}_i^{\sigma_1}=0,\\[2pt]
                    1-\valu{\psi}_i^{\sigma_2} & \tif \valu{\psi}_i^{\sigma_1}=1,
                \end{cases}
            \quad
            \\
               & \text{
            or equivalently,} \quad
               \abs{\valu{\psi}_i^{\sigma_1}-\valu{\psi}_i^{\sigma_2}} \,=\, (1-2 \valu{\psi}_i^{\sigma_1})\, \valu{\psi}_i^{\sigma_2}+ \valu{\psi}_i^{\sigma_1}.
            \end{align*}
            Note that the last expression is affine in \(\valu{\psi}_i^{\sigma_2}\) 
            (\(\valu{\psi}_i^{\sigma_1}\) is a constant).
            Applying this rule to~\cref{eq:boolean-dist}, we obtain an affine expression over variables \(\valu{\psi}_i^{\sigma_2}\).
        \end{remark}

        
        Given a fixed variability bound \(N\in\Nats\), we incorporate the Boolean distance of traces into the MILP objective while keeping the original constraints from \cref{sec:milp} unchanged. This MILP problem is then solved iteratively, for $m=1,2,\dotsc$, to synthesize a trace $\sigma_m$ that is maximally distinct (in terms of Boolean distance) from the previously synthesized traces $\sigma_1,\dotsc,\sigma_{m-1}$. The objective in each iteration is defined as follows.
        
        \begin{itemize}
            \item \textbf{Initialization ($m=1$).} Maximize $f_1\ddef 0$.
            \item \textbf{Iteration ($m>1$).} Let \(\sigma_m\) denote the trace to be synthesized at the current iteration (this gives decision variables). Given the previously obtained traces \(\{\sigma_i\}_{i=1}^{m-1}\) (which are constants in MILP), maximize 
                \begin{align}\label{eq:m-obj}
                    f_m \ddef \sum_{i=1}^{m-1}  \bdist\left(\sigma_m,\sigma_i\right).
                \end{align}
        \end{itemize} 
        
        By \cref{rmk:affine}, $f_m$ is an affine function of the Boolean decision variables of the trace \(\sigma_m\), with coefficients determined by the constants for \(\{\sigma_i\}_{i=1}^{m-1}\).


        \begin{remark}\label{rmk:MILP-good}
            The above diversification method is possible due to our use of MILP. A similar method in SMT is nontrivial. The key difference is that, in MILP, truth values are encoded as integer variables and they can be summed up (as in \cref{eq:m-obj}), whereas in SMT they are propositional variables and cannot be summed. Counting the number of subformulas with different truth values across models would require techniques such as MaxSMT or model counting, which we leave for future work.
        \end{remark}

    \subsection{Randomized Boolean Distance Method (RBD)}\label{subsec:rbd}

        This subsection introduces a randomized variant of the Boolean distance. It promotes diversity by evaluating each candidate trace against an independently sampled reference, thereby decoupling the current solution from all previous ones.
        

        Specifically, for a fixed variability bound \(N\), the valuation of all \(\psi\in\sub\) across the intervals \((\Gamma_i)_{i=1}^N\) form an \(N\card{\sub}\)-dimensional Boolean vector. We sample a \emph{reference valuation} \(\reftrace\) from the uniform distribution in \(\Bool^{N\card{\sub}}\) and solve an MILP problem that maximizes the Boolean distance between \(\reftrace\) and the candidate trace. Let \(\sigma\) denote the candidate trace, the objective is
        \[
            f \ddef \bdist(\sigma,\reftrace),
        \]
        which yields a trace that is as far as possible, in Boolean distance, from the reference valuation. By \cref{rmk:affine}, \(f\) is affine in the decision variables of trace \(\sigma\).

        \begin{remark}
            The choice between maximization and minimization is immaterial: for any reference valuation \(\reftrace \in \Bool^{N\,\card{\sub}}\), 
            \[ 
                \bdist\,(\sigma,\reftrace)=N\,|\sub|-\bdist(\sigma,\bar{\sigma}_{\mathrm{ref}}),
            \]
            where \(\bar{\sigma}_{\mathrm{ref}}\) denotes the bitwise complement of \(\reftrace \). Hence, maximizing distance from \(\reftrace\) is equivalent to minimizing distance to \(\bar{\sigma}_{\mathrm{ref}}\).
        \end{remark} 

        The following proposition provides a quantitative guarantee that the probability of generating identical reference traces is small, which supports the diversity of traces. The proof is deferred to \cref{appendix:proof}.

        \begin{restatable}{proposition}{propDiv}
            For a \(\delta\)-stable partition \(\Gamma=(\gamma_0, \cdots, \gamma_N)\) of the time horizon for a formula \(\phi\), let \(d \ddef N\,|\sub|\). Suppose \(m\ge 2\) reference valuations \(\reftrace^1,\dotsc,\reftrace^m\) are drawn independently and uniformly from $\Bool^{d}$. Let $p$ denote the probability that at least two references coincide. Then
            \[
                p \;=\; 1 - \frac{2^d!}{(2^d-m)! \, 2^{md}}\raisebox{-2ex}{,}
            \]
            and the following upper bound holds:
            \[
                p\:\le \:\frac{m(m-1)}{2^{d+1}}\raisebox{-2ex}{.}
            \]
        \end{restatable}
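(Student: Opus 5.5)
This is the birthday problem with $n \ddef 2^d$ equally likely values, and I would treat it exactly that way. The $\delta$-stable partition $\Gamma$ and the formula $\phi$ serve only to fix the dimension $d = N\,|\sub|$. Once $d$ is fixed, each reference valuation $\reftrace^j$ is a uniform random element of a set of size $n$, and the $m$ draws are independent.

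\textbf{Exact formula.} I would pass to the complement event $E$ that all $m$ references are pairwise distinct.
\begin{itemize}
\item The sample space is $\Bool^{d}\times\dots\times\Bool^{d}$ ($m$ factors). It has $n^m = 2^{md}$ equally likely outcomes.
\item The favourable outcomes are injective sequences of length $m$ from a set of size $n$. There are $n(n-1)\cdots(n-m+1) = n!/(n-m)!$ of them.
\end{itemize}
This gives $\Pr[E] = \frac{2^d!}{(2^d-m)!\,2^{md}}$, and hence $p = 1-\Pr[E]$.

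I would add one remark on the degenerate case $m > 2^d$. There $p = 1$ by pigeonhole. The formula should be read with the falling factorial, which vanishes in this case, so it remains valid.

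\textbf{Upper bound.} I would use a union bound rather than manipulate the product.
\begin{itemize}
\item For each unordered pair $\{j,k\}$ with $j\neq k$, let $A_{jk}$ be the event $\reftrace^j = \reftrace^k$.
\item By independence and uniformity, $\Pr[A_{jk}] = \sum_{b\in\Bool^d}\Pr[\reftrace^j=b]\Pr[\reftrace^k=b] = 2^{-d}$.
\item The event "at least two coincide" is exactly $\bigcup_{j<k} A_{jk}$. There are $\binom{m}{2}$ such pairs.
\end{itemize}
Therefore $p \le \binom{m}{2}2^{-d} = \frac{m(m-1)}{2^{d+1}}$.

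As a cross-check, one can write $\Pr[E]=\prod_{i=0}^{m-1}(1-i/n)$. The inequality $\prod(1-a_i)\ge 1-\sum a_i$ for $a_i\in[0,1]$, proved by induction on $m$, then gives the same bound. This version requires $m\le n$; when $m>n$ the bound exceeds $1$ and holds trivially.

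I expect no real obstacle. The only points needing care are checking that the counting model matches the sampling assumption (independent and uniform over $\Bool^d$) and handling the edge case $m>2^d$ in the closed form.
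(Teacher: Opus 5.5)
Your proof is correct and follows the paper's. For the exact formula you count injective sequences, where the paper multiplies the sequential conditional probabilities $\prod_{k=1}^{m-1}(1-k/2^d)$; this is the same computation. For the upper bound you use the same union bound over the $\binom{m}{2}$ pairs. Your remark on the case $m>2^d$ addresses a point the paper leaves implicit: there the factorial expression must be read as a falling factorial.
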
 

        \subsection{Value Distance Method (VD)}\label{subsec:vd}

        \rev{Although the Boolean distance can serve as a metric for diversification, we also consider a more direct metric:} the distance between the trace values. For two piecewise-linear traces that share the same time sequence, their \emph{value distance} is defined as follows.

        \begin{definition}
            Let \(\varsigma_1,\varsigma_2\) be two time state sequences (\cref{def:timeseq}) with the same time sequence \(\Gamma = (\gamma_0, \dotsc, \gamma_N)\). Let \(V\) be a set of signal variables. For two piecewise-linear traces \(\sigma_1, \sigma_2: [\gamma_0,\gamma_N] \to \Reals^{\card{V}}\) generated from \(\varsigma_1,\varsigma_2\), respectively, their \emph{value distance} is defined by
            \[
                \vd(\sigma_1,\sigma_2) \ddef \sum_{i=0}^{N}\lVert\sigma_1(\gamma_i) - \sigma_2(\gamma_i)\rVert_1 = \sum_{i=0}^{N}\sum_{v\in V}\abs{\sigma_1(\gamma_i)(v) - \sigma_2(\gamma_i)(v)} \raisebox{-.8ex}{.}
            \]
            Here $\lVert\cdot\rVert_1$ denotes the $\ell_1$-norm on $\Reals^{\card{V}}$.
        \end{definition}

        Essentially, the value distance is the cumulative distance between traces at \((\gamma_i)_{i=0}^N\) of the time sequence \(\Gamma\). 
        
        \begin{remark}
            \rev{Note that, although the time partitions $\Gamma$ may vary across traces for BD and RBD (\ie $\Gamma^{\sigma_i}$ may differ from $\Gamma^{\sigma_j}$), they are fixed for VD. This is because allowing variable partitions in VD may reduce meaningful diversity: two identical signals could appear artificially distant if compared at different time instants.}
        \end{remark}
        The diversified trace synthesis procedure is iterative: at each step it maximizes the value distance between the candidate trace and all previously generated traces, subject to the constraint that all traces share the same time sequences. Let \(\sigma_m\) denote the piecewise-linear signal produced at iteration $m$, and let $(\gamma_0^{(m)},\ldots,\gamma_N^{(m)})$ be the corresponding $\delta$-stable partition at iteration \(m\). The iterative synthesis procedure is as follows.

        \begin{itemize}
            \item \textbf{Initialization ($m=1$).} Solve the original MILP problem with objective $f_1\ddef 0$.
            \item \textbf{Iteration ($m>1$).} Let $\mathcal{C}$ denote the original MILP constraints from \cref{sec:milp}. Solve the following MILP:
            \begin{align}\label{eq:value-dif}
                \text{maximize } &\sum_{k=1}^{m-1}\sum_{i=0}^N\sum_{v\in V} Z_{k,i,v}, \\
                \text{subject to } & (\C), \nonumber \\
                & \gamma_i = \gamma_i^{(m-1)} & 0 \leq i \leq N, \nonumber\\
                & Z_{k,i,v} = \texttt{abs}(Y_{k,i,v}) & 1\leq k \leq m-1 , 0\leq i\leq N, v\in V, \nonumber\\
                & Y_{k,i,v} = x_{i,v} - {\sigma}_k(\gamma_i)(v) & 1\leq k \leq m-1 , 0\leq i\leq N, v\in V.\nonumber
            \end{align}
            Here, \(x_{i,v}\) is the MILP decision variable representing the value of signal variable \(v\in V\) on the interval \([\gamma_{i-1},\gamma_i]\).
        \end{itemize}

        At iteration \(m>1\), to 
        \begin{enumerate*}[label=\roman*)]
            \item ensure that synthesized traces share the same \(\delta\)-stable partition, and 
            \item simplify the encoding of absolute values in the objective, 
        \end{enumerate*}
        we augment \(\mathcal{C}\) with the constraints above. The relations \(Z_{k,i,v}=\mathtt{abs}(Y_{k,i,v})\) are implemented by standard MILP linearization (introducing auxiliary variables and linear inequalities), so the resulting problem remains an MILP.

        \begin{remark}
            Due to the use of absolute-value terms in the constraints, the MILP complexity increases with each iteration. Specifically, at iteration \(m\), new variables \(Y_{m,i,v}\) and \(Z_{m,i,v}\) are introduced for every \(i=0,\dots,N\) and \(v\in V\), adding \(2(N+1)|V|\) variables per iteration. Consequently, the overall complexity of generating \(m\) solutions grows faster than linearly with respect to the size of the individual MILP problem.
        \end{remark}

        Nevertheless, the value distance method is substantially faster than the two Boolean distance approaches when synthesizing a small number of traces. The value distance objective sums over signal variables, whereas the Boolean distance objective sums over subformulas. Since \(\card{V}\) is typically much smaller than \(\card{\sub}\) for realistic STL formulas, the value distance objective contains far fewer terms and is therefore cheaper to optimize. See \cref{sec:exp} for experimental results.

    \begin{figure}[t]
  \centering
  \begin{subfigure}[t]{1\textwidth}
    \centering
    \begin{lstlisting}
 import gurobipy as gp
 from stltspref.problem import ~create_stl_milp_problem~

 bound = 10                           # variability bound (see Problem 1)
 milp = gp.Model()                            # Gurobi model
 prob = ~create_stl_milp_problem~(milp,N=bound) # initialize the MILP problem 
    \end{lstlisting}
    \vspace{-2mm}
    \caption{Create the MILP problem \texttt{prob}.}
\end{subfigure}
\vspace{2mm}
\begin{subfigure}[t]{1\textwidth}
    \begin{lstlisting}
 # Define system model and its continuous states
 car = prob.~create_system_model~()
 car.~add_state~('x', 0, 40)            # position x in [0,40]
 car.~add_state~('v', -5, 5)            # velocity v in [-5,5]
 car.~add_state~('a', -3, 3)            # acceleration in [-3,3]

 # Set initial state and dynamics (double-integrator model)
 car.~set_initial_state~(x=(0,0), v=(0,0), a=(0,0))   # initial value ranges
 car.~add_dynamics~('a', -3, 3, constant=True) 
 car.~add_double_integrator_dynamics~('x', 'v', 'a')
    \end{lstlisting}
    \vspace{-2mm}
    \caption{Define the system model \(\dot x = v,\ \dot v = a,\ a\in[-3,3]\), and add it to \texttt{prob}.}
    \label{subfig:model}
  \end{subfigure}
  \vspace{2mm}
  \begin{subfigure}[t]{1\textwidth}
    \centering
    \begin{lstlisting}
 from stltspref.linear_expression import LinearExpression as L
 from stltspref.stl import ~Atomic~, ~BoundedAlw~, ~Ev~

 danger = ~Atomic~(L.@unit@('x') <= 10)   # atomic proposition
 stl_spec = ~Ev~(~BoundedAlw~([0, 5], danger))           # STL specification 

 prob.~initialize_milp_formulation~(stl_spec)          # add the spec to MILP
    \end{lstlisting}
    \vspace{-2mm}
    \caption{Specify the STL formula \(\ev\left(\alw_{[0,5]}x \le 10\right)\) and add it to \texttt{prob}.}
    \label{subfig:spec}
  \end{subfigure}
  \vspace{2mm}
  \begin{subfigure}[t]{1\textwidth}
    \centering
    \begin{lstlisting}
 from stltspref.preferential_synthesis import ~diversity_finder~

 diversity_method = 'RBD'    # random Boolean distance method
 trace_num = 5             # synthesis 5 diverse traces

 ~diversity_finder~(prob, diversity_method, bound, trace_num)
    \end{lstlisting}
    \vspace{-2mm}
    \caption{Solve \texttt{prob} to synthesize diverse traces.}
    \label{subfig:solve}
  \end{subfigure}
  \caption{Workflow for MILP-based trace synthesis using STLts-Div.}
  \label{fig:syn-code}
\end{figure}

\section{Demonstration of STLts-Div}\label{sec:demo}
    
    STLts-Div is a Python package for MILP-based diversified trace synthesis from STL. It can be installed by 
    \begin{enumerate*}[label=\roman*)] 
        \item ensuring a working Gurobi installation, and 
        \item running \texttt{pip install stltspref}.
    \end{enumerate*}
    \rev{We use the name \texttt{stltspref} for the pip package to accommodate future extensions incorporating preference-based trace synthesis methods.}

    The procedure to run STLts-Div is shown in \cref{fig:syn-code}. The user must provide the following input.
    \begin{itemize}
        \item A system model. STLts-Div supports models such as \emph{double-integrator dynamics} and \emph{rectangular hybrid automata} (RHA) (cf.\ \cite{Sato2024} for details); see \cref{subfig:model}.
        \item An STL formula (encoded in a DSL style); see \cref{subfig:spec};
        \item A diversification method: SP (Gurobi solution pool \cite{gurobi}), BD (Boolean distance; \cref{subsec:bd}), RBD (randomized Boolean distance; \cref{subsec:rbd}), or VD (value difference; \cref{subsec:vd}); see \cref{subfig:solve};
        \item The desired number of traces to synthesize (a positive integer); see \cref{subfig:solve}.
    \end{itemize}

\section{Experimental Analysis}\label{sec:exp}
    In this section, we evaluate our tool across different benchmarks and STL specifications to assess the efficiency and effectiveness of each diversification method. As a baseline, we also include the standard approach that extracts traces directly from the solution pool of Gurobi.

    \subsection{Experiment Setting}
        \myparagraph{Benchmarks}
        To evaluate the efficiency of our diversification tool, we used benchmarks from four categories: \emph{Danger-then-Stop} (DSTOP), \emph{Rear-End Near Collision} (RNC), \emph{Navigation} (NAV), and \emph{Disturbance Scenarios in ISO~34502} (ISO). Each category shares the same system model but includes several distinct STL specifications, resulting in a total of 13 benchmarks.

        \paragraph{Danger then stop (DSTOP)} Consider two cars in a single lane, where the rear car \(\mathrm{Car_r}\) repeatedly approaches the front car \(\mathrm{Car_f}\) too closely, creating a potentially dangerous situation. Eventually, one of the cars comes to a stop. The system dynamics are given by \(\dot x_{\rear}=v_{\rear}\), \(\dot v_{\rear} = a_{\rear}\), \(\dot x_{\front} = v_{\front}\), \(\dot v_{\front} = a_{\front}\), with \(a_{\rear},a_{\front} \in [-3,3]\). The STL specification \(\mathtt{DSTOP}\) is defined as follows:

        \begin{align*}
            \begin{array}{rll}
            & \mathtt{lead} \lddef x_{\front} - x_{\rear} \geq 0,\qquad
            \mathtt{close} \lddef x_{\front} - x_{\rear} \leq 10 , \qquad
            \mathtt{far} \lddef x_{\front} - x_{\rear} \geq 40 , \\
            & \mathtt{DSTOP} \lddef (\alw ~\mathtt{lead}) \wedge (\alw_{[0,2]}\mathtt{far})\wedge \left[(\ev_{[0,10]} \mathtt{close}) \unt_{[10,15]} (v_{\front} \leq 0 \vee v_{\rear} \leq 0)\right]. 
            \end{array}
        \end{align*}

        \paragraph{Rear-End Near Collision (RNC1–3)} The system model for RNC1--3 is identical to that of DSTOP. These benchmarks describe driving scenarios in which the rear car \(\mathrm{Car_r}\) approaches the front car \(\mathrm{Car_f}\) too closely. The corresponding STL specifications, \(\mathtt{RNC1}\), \(\mathtt{RNC2}\), and \(\mathtt{RNC3}\), are defined as follows:
        \begin{align*}
        \begin{array}{rl}
            \mathtt{danger}     & \quad\lddef\quad x_{\front} - x_{\rear} \leq 10 ,                                                                                               \\
            \mathtt{dyn\_{inv}} & \quad\lddef\quad  x_{\front} - x_{\rear} \ge 0\,\land\, 2 \leq v_{\front} \leq 27 \,\land\, 2 \leq v_{\rear} \leq 27,
            \\
            \mathtt{trimming}   & \quad\lddef\quad  (\ev \mathtt{danger}) \Rightarrow \bigl((\alw_{[0, 0.2]} a_{\rear} \geq 0.5) \unt \mathtt{danger}\bigr),
            \\
            \mathtt{trimming2} & \quad\lddef\quad (\Diamond \mathtt{danger}) \Rightarrow \bigl((\Box_{[0, 1]} a_{\rear} \geq 1) \mathbin{\mathcal{U}} \mathtt{danger}\bigr),
        \\
            \mathtt{RNC1}       & \quad\lddef\quad  \alw(\mathtt{dyn\_inv}\land \mathtt{trimming}) \land \ev_{[0, 9]}\alw_{[0,1]}\mathtt{danger},\\
            \mathtt{RNC2}      & \quad\lddef\quad  \bigl(\Box (x_{\front} - x_{\rear} \ge 0)\bigr) \land
        \\
                           & \qquad \qquad
        \Diamond_{[0, 9]} \bigl(
        (\Box_{[0,1]} \mathtt{danger} ) \land
        (\Box_{[0,1]} a_{\rear} \ge 1 ) \land
        (\Diamond_{[1,5]} \lnot\mathtt{danger})
        \bigr)  ,   \\
        \mathtt{RNC3}      & \quad\lddef\quad  \Box(\mathtt{dyn\_inv}\land\mathtt{trimming2} ) \land \Diamond_{[0, 9]}\Box_{[0,1]}\mathtt{danger}.
            \end{array}
        \end{align*} 

        {\makeatletter
        \let\par\@@par
        \par\parshape0
        \everypar{}
    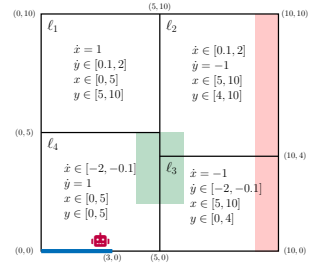
\begin{wrapfigure}[10]{r}[0pt]{4cm}
    \centering
    \vspace{-2.5em}
    \centering
    \begin{adjustbox}{max width = 1\linewidth}
        \begin{tikzpicture}
            \fill[fill=red!30, fill opacity=0.7] (9,10) rectangle (10,0); 
            \fill[fill=ForestGreen!40, fill opacity=0.7] (4,5) rectangle (6,2); 

            \draw[line width=1.5pt] (0,10) rectangle (10,0);
            \draw[line width=1.5 pt] (0,5) -- (5,5);
            \draw[line width=1.5 pt] (5,4) -- (10,4);
            \draw[line width=1.5 pt] (5,0) -- (5,10);
            \node at (-0.7, 10) {$(0,10)$};
            \node at (-0.7, 0) {$(0,0)$};
            \node at (10.7, 0) {$(10,0)$};
            \node at (10.7, 10) {$(10,10)$};
            \node at (-0.7,5) {$(0,5)$};
            \node at (10.7,4) {$(10,4)$};
            \node at (5,-0.3) {$(5,0)$};
            \node at (5,10.3) {$(5,10)$};
            \node at (3,-0.3) {$(3,0)$};

            \draw[line width=5pt, color = NavyBlue] (0,0) -- (3,0); %
            \node[color=purple] at (2.5 ,0.5) {\LARGE \faRobot};

            \node at (0.5, 9.5) {\LARGE $\ell_1$};
            \node[align=left, font=\Large] at (2.5, 7.5) {$\dot{x}=1$ \\ $\dot{y} \in [0.1,2]$ \\ $x\in[0,5]$ \\ $y \in [5,10]$};

            \node at (5.5, 9.5) {\LARGE $\ell_2$};
            \node[align=left, font=\Large] at (7.5, 7.5) {$\dot{x} \in [0.1, 2]$ \\ $\dot{y} = -1$ \\ $x\in[5,10]$ \\ $y \in [4,10]$};

            \node at (5.5, 3.5) {\LARGE $\ell_3$};
            \node[align=left, font=\Large] at (7.8, 2.3) {$\dot{x} = -1$ \\ $\dot{y} \in [-2, -0.1]$ \\ $x\in[5,10]$ \\ $y \in [0,4]$};

            \node at (0.5, 4.5) {\LARGE $\ell_4$};
            \node[align=left, font=\Large] at (2.5, 2.5) {$\dot{x} \in [-2,-0.1]$ \\ $\dot{y}=1$ \\ $x\in[0,5]$ \\ $y \in [0,5]$};
        \end{tikzpicture}
    \end{adjustbox}

    \vspace{-0.6em}
    \caption{The NAV1--2 RHA}
    \label{fig:example_nav}
    \end{wrapfigure}

    \paragraph{Navigation (NAV1–2)} The system model for NAV1–2 is adapted from~\cite{duggirala2011abstraction}, a standard example of a rectangular hybrid automaton (RHA) widely used in prior work (\eg~\cite{Sato2024,bu2022arch}). The model describes the motion of a robot on a grid consisting of four rectangular regions \(\ell_1,\dotsc,\ell_4\). Each region has its own dynamics (see \cref{fig:example_nav}). The environment includes an unsafe region \(\mathtt{unsafeR}\) (red) and a goal region \(\mathtt{goalR}\) (green). The robot starts from an initial position \((x_0,y_0)\) where \(x_0\in[0,3]\) and \(y_0=0\) (blue line). The STL specifications are $\mathtt{NAV1}\,:\equiv\,\Diamond (\Box_{[0,3]} ((x, y) \in \mathtt{goalR})) \land \Box( x \not\in \mathtt{unsafeR})$

    \par}%

    \noindent
    and $\mathtt{NAV2}\,\equiv\,\Box((x,y) \in \ell_3 \Rightarrow \Diamond_{[0,3]} (x,y) \in \ell_4)$. Here, \(\mathtt{NAV1}\) specifies a reach-avoid property, while \(\mathtt{NAV2}\) requires that whenever the robot enters region \(\ell_3\), it must reach region \(\ell_4\) within three seconds.
        
    \paragraph{Disturbance Scenarios in ISO 34502 (ISO1, ISO3, $\dotsc$, ISO8)} ISO 34502 is an international standard for testing autonomous vehicles. The system model is similar to that of RNC1–3, and the specifications \(\mathtt{ISO1}, \mathtt{ISO3}, \dotsc, \mathtt{ISO8}\) are obtained in~\cite{reimann2024temporal}. ISO2 is omitted because it involves three vehicles. Each spec \(\mathtt{ISO}i\) follows the general structure defined in~\cite{reimann2024temporal}:
    \begin{displaymath}\small
    \begin{array}{rcl}
        \mathtt{ISO}{i}
         & \;\lddef\; & \mathtt{initSafe}
        \wedge \mathtt{disturb}_i,
        \qquad                            \\
        \mathtt{disturb}_i
         & \;\lddef\; &
        \mathtt{initialCondition}_i
        \wedge \mathtt{behaviourSV}_i
        \wedge \mathtt{behaviourPOV}_i,
    \end{array}
\end{displaymath}
where SV denotes the subject (``ego'') vehicle and POV the principal other vehicle. The subformulas \(\mathtt{initialCondition}_i\), \(\mathtt{behaviourSV}_i\), and \(\mathtt{behaviourPOV}_i\) vary across scenarios (indexed by \(i\)). Their detailed definitions are beyond the scope of this paper; please refer to~\cite{reimann2024temporal} for full descriptions.

        \myparagraph{Experiment design}
            We evaluate four diversification strategies implemented in our tool—SP (solution pool), BD (Boolean distance), RBD (randomized Boolean distance), and VD (value distance)—for synthesizing diverse traces. For each benchmark and each method, we generate \(8\) traces. We set the variability bound as \(10\) uniformly for each benchmark. \rev{While measuring diversity via coverage metrics over the space of all possible traces would provide a comprehensive quantification of diversity, the infinite nature of the trace space (since signals evolve in continuous time) makes such an approach computationally intractable. Investigating tractable approximations is left for future work.} Instead, we quantify diversity using the aggregate pairwise Euclidean distance (APED) over time, a common choice in the literature~\cite{cheng2023behavexplor}:
            \[
                \Deuc \;\ddef\; \sum_{i<j}\int_{0}^{T} \bigl\lVert \sigma_i(t) - \sigma_j(t) \bigr\rVert_{2}\,\dif t,
            \]
            where \(i\) and \(j\) index the traces (\(i,j\in\{1,\dots,8\}\) in our experiments) and \(T\) is the time horizon. \rev{The synthesized traces are resampled over $[0,T]$ with a time step of $0.02$,} and the integral is computed numerically in Python.
            
            
            Our experiments were conducted on an Amazon EC2 c5.2xlarge instance (3.00\,GHz Intel Xeon Platinum 8275CL, 16\,GB RAM) running Ubuntu Server 20.04\,LTS.

            \subsection{Evaluation}
       
        We evaluate our tool along from \emph{effectiveness} and \emph{efficiency}.

        \myparagraph{Effectiveness}
        \Cref{tab:to360} reports APED for each method and benchmark under a 6-minute per-trace timeout; a dash ``---'' indicates that no feasible trace was found within the timeout. Across benchmarks, our diversification methods (BD, RBD, VD) consistently achieve substantially higher diversity than the baseline (SP)—typically about an order of magnitude larger APED. 

\begin{table*}[t!]
  \centering
  \caption{Aggregated pairwise Euclidean distance under two timeout settings.}
  \label{tab:two-subtables}
  \resizebox{\textwidth}{!}{%
  \begin{subtable}[t]{1\textwidth}
    \centering
    \caption{Timeout per trace: \SI{360}{s}.}
    \label{tab:to360}
    \begin{tabular}{
      >{\raggedright\arraybackslash}p{2cm} 
      >{\centering\arraybackslash}p{2.5cm}    
      >{\centering\arraybackslash}p{2.5cm}
      >{\centering\arraybackslash}p{2.5cm}
      >{\centering\arraybackslash}p{2.5cm}
    }
      \toprule
      \textbf{Benchmarks} & \textbf{SP} & \textbf{BD} & \textbf{RBD} & \textbf{VD} \\
      \midrule
      \dstop & \aboutk{129}\ssec{4.62} & \aboutm{1.6}\ssec{200} & \aboutk{934}\ssec{1178} & \aboutm{3}\ssec{107}\\
      \rnc   & 0\ssec{3.11} & \aboutm{1.3}\ssec{2207} & \aboutm{1.5}\ssec{1881} & \aboutm{1.6}\ssec{44.0} \\
      \rncc & 0\ssec{5.87} & \aboutm{2.8}\ssec{2523} & \aboutm{2.2}\ssec{1867} & \aboutm{4.8}\ssec{25.4} \\
      \rnccc & 0\ssec{3.13} & \aboutm{1.7}\ssec{1473} & \aboutm{1.3}\ssec{158} & \aboutm{1.5}\ssec{53.0} \\
      \nav & --- & --- & --- & --- \\
      \navv & --- & --- & --- & --- \\
      \iso & \about{18}\ssec{95.9} & \aboutk{466}\ssec{2529} & \aboutk{403}\ssec{2882} & \aboutm{1.5}\ssec{2529} \\
      \isot & \about{153}\ssec{63.8} & \aboutk{701}\ssec{2527} & \aboutk{71}\ssec{2882} & \aboutk{792}\ssec{2528} \\
      \isofo & \about{772}\ssec{29.6} & \aboutk{633}\ssec{2526} & \aboutk{174}\ssec{2881} & \aboutk{598}\ssec{2526} \\
      \isofi & \aboutk{34}\ssec{266} & \aboutm{1}\ssec{2561} & \aboutk{32}\ssec{2882} & \aboutm{1.7}\ssec{2361} \\
      \isosi & \about{122}\ssec{19.1} & \aboutk{413}\ssec{2529} & --- & \aboutm{1.6}\ssec{1653} \\
      \isose & --- & \aboutk{230}\ssec{2556} & --- & \aboutk{911}\ssec{2555} \\
      \isoe & \aboutk{1.2}\ssec{66.8} & \aboutk{447}\ssec{2527} & \aboutk{291}\ssec{2882} & \aboutk{896}\ssec{2444} \\
      \bottomrule
    \end{tabular}
  \end{subtable}}

    \resizebox{\textwidth}{!}{%
  \begin{subtable}[t]{1\textwidth}
    \centering
    \caption{Timeout per trace: \SI{60}{s}.}
    \label{tab:to60}
    \begin{tabular}{
      >{\raggedright\arraybackslash}p{2cm} 
      >{\centering\arraybackslash}p{2.5cm}    
      >{\centering\arraybackslash}p{2.5cm}
      >{\centering\arraybackslash}p{2.5cm}
      >{\centering\arraybackslash}p{2.5cm}
    }
      \toprule
      \textbf{Benchmarks} & \textbf{SP} & \textbf{BD} & \textbf{RBD} & \textbf{VD} \\
      \midrule
      \dstop & \aboutk{130}\ssec{4.33} & \aboutm{1.6}\ssec{157}& \aboutm{1}\ssec{237} & \aboutm{3}\ssec{102}\\
      \rnc   & 0\ssec{3.03} & \aboutm{1.3}\ssec{405} & \aboutm{1.4}\ssec{401} & \aboutm{1.6}\ssec{42} \\
      \rncc  & 0\ssec{5.52} & \aboutm{2.4}\ssec{423} & \aboutm{2.3}\ssec{477} & \aboutm{4.8}\ssec{24.4} \\
      \rnccc & 0\ssec{3.05} & \aboutm{1.7}\ssec{365} & \aboutm{1.2}\ssec{395} & \aboutm{1.6}\ssec{50.6} \\
      \nav   &  --- & --- & --- & --- \\
      \navv & --- & --- & --- & --- \\
      \iso & --- & \aboutk{79}\ssec{429} & --- & \aboutm{1.4}\ssec{429} \\
      \isot & --- & 0\ssec{428} & --- & \aboutk{133}\ssec{426} \\
      \isofo & \about{772}\ssec{32.1} & \aboutk{177}\ssec{426} & --- & \aboutk{256}\ssec{425} \\
      \isofi & --- & 0\ssec{464} & --- & \aboutm{1.5}\ssec{464} \\
      \isosi & \about{122}\ssec{21.2} & 0\ssec{429} & --- & \aboutm{1.6}\ssec{389} \\
      \isose & --- & 0\ssec{457} & --- & \aboutk{271}\ssec{458} \\
      \isoe & --- & \aboutk{88}\ssec{427} & 0\ssec{480} & \aboutk{431}\ssec{426} \\
      \bottomrule
    \end{tabular}
  \end{subtable}}
  \vspace{0.3em}

  {\footnotesize
  Entries show APED; parentheses give the total time to produce 8 traces (in seconds);
  `---' indicates that no feasible traces were produced within the timeout;
  numbers use suffixes \(k = 10^3\) and \(M = 10^6\);
  }
\end{table*}

Occasionally the baseline attains $\mathrm{APED}>0$, reflecting that some specifications are relatively permissive for the given system model: multiple distinct MILP assignments can satisfy \(\phi\) while respecting the dynamics, yielding non-identical traces. The \dstop, \isofi, and \isoe  exhibit this behaviour.

                    \myparagraph{Efficiency}
        \Cref{tab:to360} reports the total time required to generate the full set of traces. For \nav and \navv no feasible trace was found within our setting; this matches \cite{Sato2024}, which reports that \nav  and \navv becomes solvable only with a much larger variability bound (e.g. \(N=17\) for \nav, versus our uniform bound N=10).

        Because our diversification is posed as an optimization problem, a solver timeout \emph{does not} necessarily imply infeasibility---it may simply return a suboptimal or low-quality solution. To assess sensitivity to the time budget, we repeated the experiments with a reduced per-trace timeout of 60 seconds; results appear in \cref{tab:to60}.

        \begin{figure}[t]
            \centering
            \begin{subfigure}[b]{0.4\textwidth}
                \includegraphics[width=\textwidth]{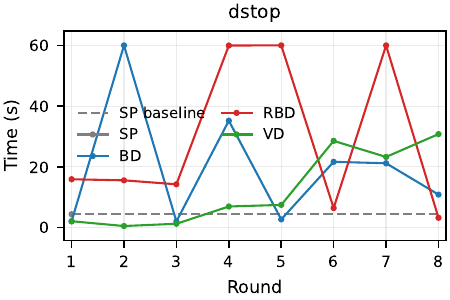}
            \end{subfigure}\hfill
            \begin{subfigure}[b]{0.4\textwidth}
                \includegraphics[width=\textwidth]{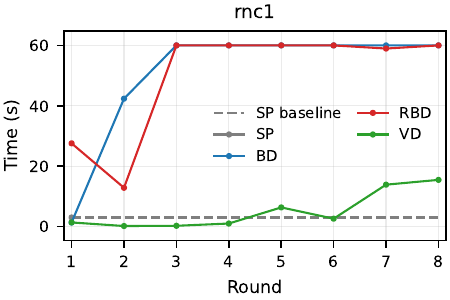}
            \end{subfigure}
            \medskip
            \begin{subfigure}[b]{0.4\textwidth}
                \includegraphics[width=\textwidth]{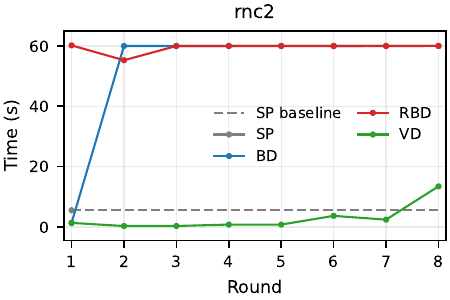}
            \end{subfigure}\hfill
            \begin{subfigure}[b]{0.4\textwidth}
                \includegraphics[width=\textwidth]{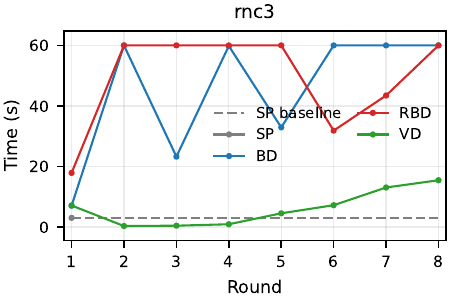}
            \end{subfigure}
            \caption{Round time of \dstop, \rnc, \rncc, \rnccc.}
            \label{fig:round_time}
        \end{figure}

        For relatively simple benchmarks (\eg \dstop, \rnc, \rncc), reducing the timeout has little effect on diversity for any method. \rev{In particular, RBD requires significantly less time while producing nearly the same APED. This suggests that, with a larger time budget (\ie 360s per trace), the optimization process may spend most of its time attempting to approach the optimal solution without reaching it before the timeout. In practice, however, a near-optimal solution already provides sufficient diversity, which explains the similar APED values observed under both timeout settings.} 
        
        By contrast, on more challenging benchmarks (\eg \iso, \isoe, \isofo), performance diverges: RBD often fails to produce feasible traces, and BD frequently returns only trivial repeats after the first iteration ($\mathrm{APED} = 0$). VD is the most robust under tight time limits: it continues to produce nontrivial, diverse traces even with the shorter timeout.

        To study how per-iteration runtime (i.e., the time to synthesize one trace) evolves with the iteration index, we record the runtime at each iteration for \dstop, \rnc, \rncc, and \rnccc and plot the results in \cref{fig:round_time}. For VD, the per-iteration time increases steadily, consistent with our analysis in \cref{subsec:vd} that each round enlarges the MILP problem (additional auxiliary variables). In contrast, BD and RBD show fluctuating runtimes with no clear trend: BD changes the coefficients of the accumulated objective each round, and RBD re-samples the reference, both of which alter the search landscape and lead to non-monotonic solver times.

\begin{table}[tbp]
  \centering
  \caption{Comparison of diversification methods}
  \label{tab:diversification-comparison}
  \resizebox{\textwidth}{!}{%
  \begin{threeparttable}
  \setlength{\tabcolsep}{5pt}
\begin{tabular}{>{\raggedright}p{8em} >{\raggedright}p{7em} >{\raggedright}p{7em} >{\raggedright}p{7em} >{\raggedright\arraybackslash}p{7em}}
    \toprule
    Criteria & \textbf{SP} & \textbf{BD} & \textbf{RBD} & \textbf{VD} \\
    \hline
    \textbf{Total runtime} &
      \feature{++}{1--100} &
      \feature{-}{Varying with timeout setting} &
      \feature{-}{Varying with timeout setting} &
      \feature{+}{Varying with timeout setting} \\
      \hline
    \textbf{Stable per-trace time} &
      \feature{--} &
      \feature{+} Varying across iterations  &
      \feature{+} Varying across iterations  &
      \feature{-} Time increase per iteration\\
      \hline
    \textbf{Diversity} &
      \feature{-} &
      \feature{++} &
      \feature{++} &
      \feature{++} \\
      \hline
    \textbf{Compatibility with other tools} &
      \feature{++} Support all MILP-based formulations &
      \feature{+} Compatible only with full Boolean-variable formulations &
      \feature{+} Compatible only with full Boolean-variable formulations &
      \feature{-} Limited compatibility\\
        \specialrule{0.04em}{0pt}{1pt}
      \textbf{Deterministic} &
      \cmark &
      \cmark &
      \xmark &
      \cmark \\
    \bottomrule
  \end{tabular}
    \centering
    \feature{++} = good; \feature{+} = moderate; \feature{-} = poor; \feature{--} = not applicable.
    \end{threeparttable}
    }

\end{table}

\myparagraph{Summary} \Cref{tab:diversification-comparison} contrasts the three diversification strategies. Achieving diversity incurs extra runtime for all of them. The ideas behind BD and RBD are broadly portable: for problems with only Boolean decision variables, adding (possibly randomized) Boolean distance terms can produce multiple diverse solutions. By contrast, VD depends on continuous state variables and absolute-value linearization, which enlarges the MILP problem and makes the approach less directly compatible with other tools or settings.

\section{Conclusion}\label{sec:conclusion}

We introduced diversification into MILP-based STL trace synthesis. We formulated several complementary objectives---\eg Boolean distance, randomized Boolean distance, and value distance---that promote dissimilarity by maximizing an appropriate distance between the candidate trace and previously generated ones. We implemented these objectives in STLts-Div and evaluated effectiveness and efficiency on a suite of benchmarks, showing clear improvements over a solution-pool baseline.

A complete MILP encoding of robust semantics remains open. Such an encoding could enable richer preference objectives and stronger guarantees, including true corner-case synthesis (robustness exactly 0) and additional forms of preferential synthesis.

\bibliographystyle{splncs04}
\bibliography{reference.bib}

@inbook{Maler2004,
  title = {Monitoring Temporal Properties of Continuous Signals},
  ISBN = {9783540302063},
  ISSN = {1611-3349},
  url = {http://dx.doi.org/10.1007/978-3-540-30206-3_12},
  DOI = {10.1007/978-3-540-30206-3_12},
  booktitle = {Formal Techniques,  Modelling and Analysis of Timed and Fault-Tolerant Systems},
  publisher = {Springer Berlin Heidelberg},
  author = {Maler,  Oded and Nickovic,  Dejan},
  year = {2004},
  pages = {152–166}
}

@inbook{Sato2024,
  title = {Optimization-Based Model Checking and Trace Synthesis for Complex STL Specifications},
  ISBN = {9783031656330},
  ISSN = {1611-3349},
  url = {http://dx.doi.org/10.1007/978-3-031-65633-0_13},
  DOI = {10.1007/978-3-031-65633-0_13},
  booktitle = {Computer Aided Verification},
  publisher = {Springer Nature Switzerland},
  author = {Sato,  Sota and An,  Jie and Zhang,  Zhenya and Hasuo,  Ichiro},
  year = {2024},
  pages = {282–306}
}

@article{Fainekos2009,
  title = {Robustness of temporal logic specifications for continuous-time signals},
  volume = {410},
  ISSN = {0304-3975},
  url = {http://dx.doi.org/10.1016/j.tcs.2009.06.021},
  DOI = {10.1016/j.tcs.2009.06.021},
  number = {42},
  journal = {Theoretical Computer Science},
  publisher = {Elsevier BV},
  author = {Fainekos,  Georgios E. and Pappas,  George J.},
  year = {2009},
  month = sep,
  pages = {4262–4291}
}

@misc{gurobi,
  author = {{Gurobi Optimization, LLC}},
  title = {{Gurobi Optimizer Reference Manual}},
  year = 2024,
  note = {\\\url{https://www.gurobi.com}}
}

@misc{appendix,
  author = {Martin Jouve-Genty},
  title = {{Preferential Exemplification for MILP-Based Trace Synthesis of STL Specifications, Online Appendix}},
  year = 2025,
  note = {\\Accessible at \url{https://gitlab.aliens-lyon.fr/mjouvege/stlts-preferential-synthesis}}
}

@inproceedings{lee2021efficient,
  title={Efficient SMT-based model checking for signal temporal logic},
  author={Lee, Jia and Yu, Geunyeol and Bae, Kyungmin},
  booktitle={2021 36th IEEE/ACM international conference on automated software engineering (ASE)},
  pages={343--354},
  year={2021},
  organization={IEEE}
}

@inproceedings{prabhakar2018automatic,
  title={Automatic trace generation for signal temporal logic},
  author={Prabhakar, Pavithra and Lal, Ratan and Kapinski, James},
  booktitle={2018 IEEE Real-Time Systems Symposium (RTSS)},
  pages={208--217},
  year={2018},
  organization={IEEE}
}

@article{alur1996benefits,
  title={The benefits of relaxing punctuality},
  author={Alur, Rajeev and Feder, Tom{\'a}s and Henzinger, Thomas A},
  journal={Journal of the ACM (JACM)},
  volume={43},
  number={1},
  pages={116--146},
  year={1996},
  publisher={ACM New York, NY, USA}
}

@article{rabinovich1998decidability,
  title={On the decidability of continuous time specification formalisms},
  author={Alexander Moshe Rabinovich},
  journal={Journal of logic and computation},
  volume={8},
  number={5},
  pages={669--678},
  year={1998},
  publisher={OUP}
}

@article{bae2019bounded,
  title={Bounded model checking of signal temporal logic properties using syntactic separation},
  author={Bae, Kyungmin and Lee, Jia},
  journal={Proceedings of the ACM on Programming Languages},
  volume={3},
  number={POPL},
  pages={1--30},
  year={2019},
  publisher={ACM New York, NY, USA}
}

@article{gurobi2019gurobi,
  title={Gurobi Optimization, LLC, Gurobi optimizer reference manual},
  author={Gurobi Optimization, LLC},
  journal={URL http://www. gurobi. com},
  year={2019}
}

@inproceedings{duggirala2011abstraction,
  title={Abstraction refinement for stability},
  author={Duggirala, Parasara Sridhar and Mitra, Sayan},
  booktitle={2011 IEEE/ACM Second International Conference on Cyber-Physical Systems},
  pages={22--31},
  year={2011},
  organization={IEEE}
}

@inproceedings{bu2022arch,
  title={Arch-comp22 category report: Hybrid systems with piecewise constant dynamics and bounded model checking},
  author={Bu, Lei and Frehse, Goran and Kundu, Atanu and Ray, Rajarshi and Shi, Yuhui and Zaffanella, Enea and others},
  booktitle={Proceedings of 9th International Workshop on Applied Verification of Continuous and Hybrid Systems (ARCH22). EPiC Series in Computing},
  volume={90},
  pages={44--57},
  year={2022}
}

@inproceedings{reimann2024temporal,
  title={Temporal logic formalisation of ISO 34502 critical scenarios: modular construction with the RSS safety distance},
  author={Reimann, Jesse and Mansion, Nico and Haydon, James and Bray, Benjamin and Chattopadhyay, Agnishom and Sato, Sota and Waga, Masaki and Andr{\'e}, {\'E}tienne and Hasuo, Ichiro and Ueda, Naoki and others},
  booktitle={Proceedings of the 39th ACM/SIGAPP Symposium on Applied Computing},
  pages={186--195},
  year={2024}
}

@inproceedings{cheng2023behavexplor,
  title={Behavexplor: Behavior diversity guided testing for autonomous driving systems},
  author={Cheng, Mingfei and Zhou, Yuan and Xie, Xiaofei},
  booktitle={Proceedings of the 32nd ACM SIGSOFT International Symposium on Software Testing and Analysis},
  pages={488--500},
  year={2023}
}

@inproceedings{ernst2021arch,
  title={ARCH-COMP 2021 Category Report: Falsification with Validation of Results.},
  author={Ernst, Gidon and Arcaini, Paolo and Bennani, Ismail and Chandratre, Aniruddh and Donz{\'e}, Alexandre and Fainekos, Georgios and Frehse, Goran and Gaaloul, Khouloud and Inoue, Jun and Khandait, Tanmay and others},
  booktitle={ARCH@ ADHS},
  pages={133--152},
  year={2021}
}

@inproceedings{atkins2013aerospace,
  title={Aerospace cyber-physical systems education},
  author={Atkins, Ella M and Bradley, Justin M},
  booktitle={AIAA Infotech@ Aerospace (I@ A) Conference},
  pages={4809},
  year={2013}
}

@article{tomlin2000game,
  title={A game theoretic approach to controller design for hybrid systems},
  author={Tomlin, Claire J and Lygeros, John and Sastry, S Shankar},
  journal={Proceedings of the IEEE},
  volume={88},
  number={7},
  pages={949--970},
  year={2000},
  publisher={IEEE}
}

@article{ye2008modelling,
  title={Modelling excitable cells using cycle-linear hybrid automata},
  author={Ye, Pei and Entcheva, Emilia and Smolka, Scott A and Grosu, Radu},
  journal={IET systems biology},
  volume={2},
  number={1},
  pages={24--32},
  year={2008},
  publisher={IET}
}

@article{donze2015blustl,
  title={BluSTL: Controller Synthesis from Signal Temporal Logic Specifications.},
  author={Donz{\'e}, Alexandre and Raman, Vasumathi and Frehse, G and Althoff, M},
  journal={ARCH@ CPSWeek},
  volume={34},
  pages={160--8},
  year={2015}
}

@inproceedings{raman2015reactive,
  title={Reactive synthesis from signal temporal logic specifications},
  author={Raman, Vasumathi and Donz{\'e}, Alexandre and Sadigh, Dorsa and Murray, Richard M and Seshia, Sanjit A},
  booktitle={Proceedings of the 18th international conference on hybrid systems: Computation and control},
  pages={239--248},
  year={2015}
}

@inproceedings{raman2014model,
  title={Model predictive control from signal temporal logic specifications: A case study},
  author={Raman, Vasumathi and Maasoumy, Mehdi and Donz{\'e}, Alexandre},
  booktitle={Proceedings of the 4th ACM SIGBED international workshop on design, modeling, and evaluation of cyber-physical systems},
  pages={52--55},
  year={2014}
}

@inproceedings{su2024switching,
  title={Switching controller synthesis for hybrid systems against STL formulas},
  author={Su, Han and Feng, Shenghua and Zhan, Sinong and Zhan, Naijun},
  booktitle={International Symposium on Formal Methods},
  pages={229--247},
  year={2024},
  organization={Springer}
}

@inproceedings{su2025runtime,
  title={Runtime enforcement of CPS against signal temporal logic},
  author={Su, Han and Shankar, Saumya and Pinisetty, Srinivas and Roop, Partha S and Zhan, Naijun},
  booktitle={Proceedings of the 28th ACM International Conference on Hybrid Systems: Computation and Control},
  pages={1--11},
  year={2025}
}

@inproceedings{zhang2023online,
  title={Online causation monitoring of signal temporal logic},
  author={Zhang, Zhenya and An, Jie and Arcaini, Paolo and Hasuo, Ichiro},
  booktitle={International Conference on Computer Aided Verification},
  pages={62--84},
  year={2023},
  organization={Springer}
}

@article{lindemann2018control,
  title={Control barrier functions for signal temporal logic tasks},
  author={Lindemann, Lars and Dimarogonas, Dimos V},
  journal={IEEE control systems letters},
  volume={3},
  number={1},
  pages={96--101},
  year={2018},
  publisher={IEEE}
}

\appendix
\renewcommand{\theHsection}{appendix.\thesection}
\newpage
\section*{Appendix}
\section{STL semantics}\label[appendix]{app:stl}

\begin{definition}[Boolean semantics of STL]\label{def:stl-sem}
            For a signal \(\sigma\) and an STL formula \(\phi\), the \emph{satisfaction relation} \(\sigma\vDash\phi\) between them is defined as follows:
            \begin{align*}
            & \sigma \vDash p \iff  \pi_p(\sigma(0)) \ge 0, \quad \sigma \vDash \neg p \iff \pi_p(\sigma(0)) < 0, \\
            & \sigma \vDash  \phi_1\wedge\phi_2 \iff \sigma\vDash \phi_1\wedge \sigma \vDash\phi_2, \quad 
            \sigma \vDash  \phi_1\vee\phi_2 \iff \sigma\vDash \phi_1\vee \sigma \vDash\phi_2,\\
            &\sigma \vDash  \phi_1\unt_I\phi_2 \iff \exists t\in I,~\left(\sigma^{t}\vDash \phi_2\wedge \forall t'\in[0,t], \sigma^{t'}\vDash\phi_1\right),\\
            &\sigma \vDash \phi_1 \rels_I \phi_2 \iff \forall t\in I,~ \left(\sigma^{t} \vDash \phi_2 \vee \exists t'\in[0,t], \sigma^{t'}\vDash \phi_1\right).
            \end{align*}
        \end{definition}

        \begin{definition}[Robust semantics of STL]
            For a signal \(\sigma\) and an STL formula \(\phi\), the \emph{robust semantics} of STL returns a quantity \(\sem{\sigma}{\phi}\in\Reals\cup\{+\infty,-\infty\}\) that indicates the satisfaction level of \(\sigma\) to \(\phi\):
                \begin{align*}
                    &\sem{\sigma}{\top}  =  +\infty,\quad \sem{\sigma}{\bot} = -\infty, \quad \sem{\sigma}{p}  =  \pi_p(\sigma(0)), \quad \sem{\sigma}{\neg p } = -\pi_p(\sigma(0)), \\
                    &\sem{\sigma}{\phi_1 \wedge \phi_2 }  =  \min(\sem{\sigma}{\phi_1 }, \sem{\sigma}{\phi_2 }),\quad  
                    \sem{\sigma}{\phi_1 \vee \phi_2 }  =  \max(\sem{\sigma}{\phi_1 }, \sem{\sigma}{\phi_2 }),\\
                    &\sem{\sigma}{\varphi_1\unt_I\varphi_2}  =  \sup_{t\in I}\big(\min\big(\sem{\sigma^t}{\varphi_2},\inf_{t'\in [0,t]}\sem{\sigma^{t'}}{\varphi_1}\big)\big),\\
                    &\sem{\sigma}{\varphi_1\rels_I\varphi_2}  =  \inf_{t\in I}\big(\max\big(\sem{\sigma^t}{\varphi_2},\sup_{t'\in [0,t]}\sem{\sigma^{t'}}{\varphi_1}\big)\big).
            \end{align*}
        \end{definition}

    \section{Proof of propositions}\label[appendix]{appendix:proof}
        \propDiv*
            \begin{proof}
            The samples are i.i.d. and uniformly distributed over a set of size \(2^d\). Hence
            \begin{align*}
                \Pr(\text{all \(m\) traces are distinct})
                &= \prod_{k=1}^{m-1} \Pr \big(\reftrace^{k+1} \notin \{\reftrace^1,\dotsc,\reftrace^{k}\}\big) \\
                &= \prod_{k=1}^{m-1} \left(1 - \frac{k}{2^d}\right)
                = \frac{2^d!}{(2^d-m)!\,2^{md}} \raisebox{-2ex}{.}
            \end{align*}
            Therefore,
            \[
                p \;=\; 1 - \Pr(\text{all \(m\) traces are distinct})
                \;=\; 1 - \frac{2^d!}{(2^d-m)!\,2^{md}} \raisebox{-2ex}{.}
            \]
            
            For the upper bound,
            \[
                p = \Pr(\exists i<j,\, \reftrace^{i} = \reftrace^{j})
                \;\le\; \sum_{i<j}\Pr\big(\reftrace^{i}=\reftrace^{j}\big)
                = \binom{m}{2}\frac{1}{2^d} = \frac{m(m-1)}{2^{d+1}}\raisebox{-2ex}{.}
            \]
        \end{proof}
\end{document}